\documentclass[submission,copyright,creativecommons]{eptcs}
\providecommand{\event}{LSFA 2026} % Name of the event you are submitting to

\usepackage{iftex}

\ifpdf
  \usepackage{underscore}         % Only needed if you use pdflatex.
  \usepackage[T1]{fontenc}        % Recommended with pdflatex
\else
  \usepackage{breakurl}           % Not needed if you use pdflatex only.
\fi
\usepackage{packages}

\usepackage{mycommands}

\title{Efficient Decision Procedures for RNmatrix Semantics}
\author{
Renato R. Leme
\institute{Centre for Logic, Epistemology and The History of Science, UNICAMP, Brazil}
\email{rntreisleme@gmail.com}
    \and
    Carlos Olarte
\institute{Universit\'{e} Sorbonne Paris Nord, LIPN, CNRS, UMR 7030, F-93430, Villetaneuse, France}
\email{olarte@lipn.univ-paris13.fr}
\and
Elaine Pimentel
\institute{Department of Computer Science, University College London, UK}
\email{e.pimentel@ucl.ac.uk}
}

\def\titlerunning{Efficient Decision Procedures for RNmatrix Semantics}
\def\authorrunning{Renato R. Leme, Carlos Olarte \&  Elaine Pimentel}
\def\copyrightholders{R.R. Leme, C. Olarte \&  E. Pimentel}
\begin{document}
\maketitle

\begin{abstract}
%!TEX root = main.tex
% !TEX spellcheck = en-US

\sloppy 

Logical matrices provide a semantic framework in which connectives are interpreted by deterministic truth-functions. While elegant, this approach is often too restrictive to capture the semantics of many non-classical logics. Non-deterministic matrices (Nmatrices) generalise ordinary matrices by allowing connectives to return a set of possible truth values rather than a unique one, thereby increasing expressive power.
Restricted non-deterministic matrices (RNmatrices) further refine this framework by imposing constraints on the rows of Nmatrices, filtering out ``unsound'' rows and retaining only ``valid'' ones. This yields a more expressive semantic framework that has been successfully used to provide sound and complete semantics for a wide range of logics, including paraconsistent, intuitionistic, and modal logics.
Despite these advances, no {\em efficient} decision procedures based on RNmatrix semantics have been proposed. In this paper, we develop automated theorem provers based on RNmatrices by encoding their semantics as Satisfiability Modulo Theories (SMT) problems. The resulting provers decide validity and construct countermodels, achieving state-of-the-art performance for paraconsistent logics and competitive results for intuitionistic and modal logics.

\end{abstract}

\section{Introduction}\label{sec:intro}
%!TEX root = main.tex
% !TEX spellcheck = en-US

The principle of truth-functionality, central to classical and many-valued logics, states that the truth-value of a complex formula is uniquely determined by the truth-values of its subformulas. This makes truth tables both easy to understand and implement, enabling SAT solvers to crack remarkably hard problems. 
       
But this tidy picture cannot be obtained when crossing the boundary of classicality, reaching \eg, constructive logics. To illustrate the difficulty, suppose that propositional intuitionistic logic ($\ILp$) were given a three-valued semantics with values $\vT$ (true), $\vU$ (undetermined), and $\vF$ (false). In order to respect intuitionistic principles, the negation of the undetermined value $\vU$ would have to be $\vF$: lacking a proof of a statement certainly does not guarantee a proof of its impossibility. However, this would validate, \eg, the weak law of excluded middle $\neg \alpha \vee \neg\neg \alpha$, %as suggested by the table above, 
which is not derivable in intuitionistic logic.

More generally, in 1932 Kurt G\"{o}del showed that for any finite number $n$ of truth values, there exists a formula that an $n$-valued matrix would designate as true, but which is not valid in $\ILp$~\cite{Goedel32}. Intuitively, capturing the behaviour of intuitionistic truth requires accounting for the potentially unbounded evolution of proofs, which cannot be faithfully represented using only finitely many truth values.
%The problem is that the 3-valued table is too ``static,'' while intuitionistic  truth is not a single choice from a menu, but rather a pathway where validity means the possibility of finding a proof. 

From this, it is immediate that the same limitation applies to the modal logic $\mSfour$, since $\ILp$ can be faithfully embedded into $\mSfour$. In 1940, James Dugundji~\cite{DBLP:journals/jsyml/Dugundji40} extended this result to the entire modal cube, demonstrating that truth-functionality cannot, in general, be preserved when moving to modal notions of ``qualified truth.''
Finally, as noted in~\cite{DBLP:journals/logcom/AvronL05}, real-world information is often incomplete, uncertain, vague, imprecise, or inconsistent, and thus such a neat truth-functional treatment is also not always adequate. 

A way to recover truth tables as a semantic framework is to relax the notion of truth-functionality by replacing functions with multi-functions, thus allowing non-deterministic evaluations of formulas. This leads to {\em non-deterministic matrices (Nmatrices)}~\cite{Avron2011}, a generalization of ordinary multi-valued matrices in which the truth-value of a complex formula is chosen non-deterministically from a non-empty set of possible values.

Nmatrices offer a simple yet powerful semantic framework. By allowing multiple possible outcomes for formulas, they capture phenomena that deterministic matrices cannot, providing satisfactory semantic accounts for logics that lack finite deterministic characterizations. At the same time, their structure remains remarkably straightforward: they can be seen as a minimal generalization of familiar truth tables, where functions are replaced by multi-functions. This combination of simplicity and expressive flexibility has made Nmatrices a valuable tool, specially for paraconsistent logics~\cite{Avron2011}: they not only enable finite characterizations of otherwise intractable logics but also support compositional reasoning~\cite{DBLP:journals/tocl/CiabattoniLSZ14} and automated analysis~\cite{DBLP:journals/rsl/AvronZ19}.

Despite their generality, Nmatrices do not suffice to semantically characterize $\ILp$, da Costa's paraconsistent logic $C_1$~\cite{DBLP:journals/ndjfl/Costa74a} and normal modal logics, as these logics cannot be captured by any single finite-valued non-deterministic matrix.
In the case of $\ILp$, while finite-valued tables for $\ILp$ are too ``static'', in the sense that they cannot capture the evolving proof-oriented nature of intuitionistic validity, introducing non-determinism swings too far in the opposite direction, becoming overly permissive and allowing rows that do not faithfully reflect the intended behavior of logical connectives\footnote{See~\cite[Theorem 3.5]{leme2025newIPL} for $\ILp$; \cite[Theorem 11]{DBLP:journals/ijar/Avron07} for paraconsistent logics and~\cite{DBLP:journals/logcom/Gratz22} for the modal cube.}. 
For example, if the negation of the undetermined value $\vU$ can assume the values in $\{\vU,\vT\}$, then there is a valuation assigning to the intuitionistic tautology $\neg\neg (\alpha\vee\neg\alpha)$ the non-designated value $\vU$.

{\em Restricted non-deterministic matrices (RNmatrices)} %address this limitation 
then swings back to find the perfect balance on non-determinism by imposing constraints on admissible rows, thereby filtering out unsound interpretations while preserving the expressive power of Nmatrices.  It was formally introduced in~\cite{coniglio2022two} as a generalization of restricted matrices (Rmatrices) of Brunetto Piochi~\cite{piochi1978matrici,piochi1983logical} to give a semantic characterization with decision procedures for the entire hierarchy of da Costa's calculi $C_n$. %\rl{reorganizei aqui, trouxe pra cima}

Nonetheless, restricting the set of valuations has been considered before. For instance, John Kearns introduced constraints on valuations known as {\em level valuations}~\cite{DBLP:journals/jsyml/Kearns81}  to characterize the modal systems $\mKT$, $\mSfour$, and $\mSfive$. The idea is that level $0$ corresponds to the standard evaluation induced by the underlying matrix, while higher levels progressively filter out valuations that violate the modal rule of necessitation, retaining only those that assign designated values to tautologies. Validity is then defined as the (generally infinite) intersection of all such levels, capturing closure under necessitation. %This method has been extended to paraconsistent logics in~\cite{coniglio2022two}, where the entire hierarchy of da Costa's calculi $C_n$ is given a semantic characterization, yielding a novel decision procedure for these logics.

Although conceptually interesting, this idea has limited practical applicability, as determining level-valuations requires accounting for the valuations of {\em all} tautologies across {\em all} levels. This requires a twofold infinite testing: on formulas and levels.
In order to tackle this problem, Lukas Gr\"{a}tz~\cite{DBLP:journals/logcom/Gratz22} introduced a \emph{decision procedure} for Nmatrices that is both sound and complete with respect to Kearns' semantics. 

This development enabled the proposal of remarkably simple and effective decision procedures to broad class of logics.
%extension of the methods of Kearns and Gr\"{a}tz to a broader class of logics.
%Building on these ideas, recent work has shown that RNmatrices can yield remarkably simple and effective decision procedures. 
For instance,~\cite{leme2025newIPL} presents a semantic characterization of $\ILp$ using three-valued RNmatrices, leading to a lightweight decision procedure for intuitionistic propositional logic; in~\cite{DBLP:conf/cade/LahavZ22}, the semantic analyticity of Kearns' semantics is proved, providing an alternative decidability proof to Gr\"{a}tz's result for $\mK$ and $\mKT$; and in~\cite{DBLP:conf/tableaux/LemeOPC25} we developed RNmatrix-based decision procedures for all 15 logics in the modal cube.

In this way, RNmatrices retain the conceptual simplicity and flexibility of Nmatrices while providing a more precise and robust semantic framework.
%
%Crucially, RNmatrices preserve the essential simplicity of truth tables. 
This makes them particularly well-suited as a foundation for automated reasoning: they combine the intuitive, tabular nature of truth-table semantics with the expressive power required to handle non-classical phenomena. % , enabling effective and scalable computational methods. 

However, the application of these semantics in the context of automatic theorem provers has been elusive. In particular, a naive implementation of the matrix semantics is limited to very small formulas due to the inherent combinatorial explosion of the state space. Hence, the next natural step, which we pursue in this paper, is to leverage this structure within modern Satisfiability Modulo Theories (SMT) solvers, thereby bridging the gap between elegant semantic frameworks and high-performance automated reasoning tools.

Inspired by science fiction, we present $\tool$~\cite{tool} (Theorem prover for
RNmatrices), a general framework for building automated theorem provers for
logics semantically described ``in the \texttt{matrix}''. By encoding RNmatrices and their
elimination criteria as SMT problems, we leverage off-the-shelf SMT solvers to
decide formula validity and construct countermodels. We demonstrate the
approach on paraconsistent logics, where our prover outperforms the current
state of the art and provides the first implementation covering the entire
$C_n$ hierarchy\footnote{We note that~\cite{neto2006effective} presents a
strategy for an automatic theorem prover based on a tableaux method for $C_1$.
Moreover, the RNmatrix semantics introduced in~\cite{coniglio2022two} naturally
induces a labelled tableau system for each $C_n$, yielding an alternative
decision procedure for these logics. Finally,~\cite{d2006analytical} also
provides decision procedures for da Costa's hierarchy based on tableaux.}. We
further evaluate the framework on intuitionistic and modal logics, where it
achieves competitive performance.
Finally, we benchmark $\tool$ against existing systems for these logics, obtaining consistently strong results.

\begin{comment}
Highlighting the contributions:
\begin{itemize}
    \item RNmatrices are more expressive to Nmatrices and they provide a ``natural'' way of
        given semantics to non-classical logics. \magenta{EP. Done!}
    \item Recently several RNmatrix semantics have been proposed, for intuitionistic, 
        modal and paraconsistent logcs. \magenta{EP. Done!}
    \item However, the application of these semantics in the context of automatic theorem provers
        has been elusive. In particular, a naive implementation of the matrix semantics
        is doomed to prove very small formulas to the inherent combinatorial state stace. \magenta{EP.Done!}
    \item This paper shows how RNmatrices can be efficiently encoded as an SMT problem, 
        thus opening the possibility of defining new automatic theorem provers for 
        all the above logics. In particular, 
        we obtain the first\footnote{As far as we know, although they can be also implemented in 
            other generic frameworks as X and Y that generate provers out of sequent or tableaux systems.}  theorem prover for $C_n$ ($n\geq 2$) is proposed. 
    \item We benchmark our prover with state-of-the-art ones for those logics,
        obtaining very competitive results. 
\end{itemize}
\end{comment}

\section{RNmatrix Semantics}\label{sec:rnmat}
%!TEX root = main.tex
% !TEX spellcheck = en-US

Starting from the well known 2-valued matrix semantics for propositional classical logics, consider the semantic tables for negation $\neg$ and  disjunction $\vee$:
\begin{center}
        \begin{tabular}{c @{\hspace{2em}} c}
        \toprule
        \textbf{$\alpha$} &  \textbf{$\neg \alpha$} \\
        \midrule
        $\vT$ & $\vF$ \\
        $\vF$ & $\vT$ \\
        \bottomrule
    \end{tabular}
  \qquad \qquad
         \begin{tabular}{c @{\hspace{2em}} c @{\hspace{2em}} c}
        \toprule
        \textbf{$\alpha$} &  \textbf{$\beta$} &  \textbf{$\alpha\vee\beta$}\\
        \midrule
        $\vT$ & $\vT$  & $\vT$\\
        $\vT$ & $\vF$  & $\vT$\\
        $\vF$ & $\vT$ &  $\vT$\\
        $\vF$ & $\vF$ &  $\vF$\\
        \bottomrule
    \end{tabular}
%      \begin{tabular}{c c @{\hspace{2em}} c}
%        \toprule
%        \textbf{$P$} & \textbf{$\neg P$} & \textbf{$\neg P \vee \neg\neg P$} \\
%        \midrule
%        T & T & T \\
%        U & F & T \\
%        F & T & T \\
%        \bottomrule
%        \end{tabular}
\end{center} 

\noindent
As noted in~\cite{Avron2011}, if one wishes to reject the law of excluded middle $\alpha\vee\neg\alpha$, it is sufficient to discard  the information concerning the second line of the truth-table for $\neg$. 
However, this leads to a problem of \emph{underspecification}, where the meaning of a connective is not fully determined.

One way to address this, while still rejecting excluded middle, is to interpret the value of $\neg\alpha$ at $\vF$ {\em non-deterministically}, allowing it to take any value in $\{\vT,\vF\}$ (table below left). This results in the following truth-table for $\alpha\vee\neg\alpha$ (below right):
\begin{center}
        \begin{tabular}{c @{\hspace{2em}} c}
        \toprule
        \textbf{$\alpha$} &  \textbf{$\neg \alpha$} \\
        \midrule
        $\vT$ & $\vF$ \\
        $\vF$ & \{$\vT$,$\vF$\} \\
        \bottomrule
    \end{tabular}
     \qquad \qquad
         \begin{tabular}{c @{\hspace{2em}} c @{\hspace{2em}} c}
        \toprule
        \textbf{$\alpha$} &  \textbf{$\neg\alpha$} &  \textbf{$\alpha\vee\neg\alpha$}\\
        \midrule
        $\vT$ & $\vF$  & $\vT$\\
        $\vF$ & $\vT$ &  $\vT$\\
        $\vF$ & $\vF$ &  $\vF$\\
        \bottomrule
    \end{tabular}
\end{center}
The last row shows that $\alpha\vee\neg\alpha$ needs not evaluate to $\vT$, and hence it is not a tautology. This provides the basic intuition behind non-deterministic matrices, where single semantic values are replaced by {\em sets} of possible values or, equivalently, functions are replaced by {\em multifunctions}.
%\footnote{Observe that, in the case of $\ILp$, this leads to an {\em overspecification}, }.

Formally, a non-deterministic matrix (Nmatrix), introduced by Avron and Lev in~\cite{DBLP:journals/logcom/AvronL05}, generalizes a logical matrix by incorporating the notion of non-deterministic truth-functions, or multifunctions, in the semantics. Such functions are mappings from tuples of truth-values to non-empty sets of truth-values, which represents the acceptable interpretations of a complex formula in a given a context. 

In what follows, given a propositional language $\Lan$, $\ForL{\Lan}$ denotes the set of well-formed formulas in $\Lan$, $\alpha,\beta\ldots$ (resp. $\Delta,\Gamma,\Lambda\ldots$) range over 
elements of $\ForL{\Lan}$ (resp. $\wp(\ForL{\Lan})$) and $p,q\ldots$ represent atomic propositions.
%and for any set $\Gamma$, $\wp(\Gamma)$ is the powerset of $\Gamma$. %, \ie, the set of all sets $\Gamma' \subseteq \Gamma$. 
Given a formula $\alpha$, we denote the set containing every subformula of $\alpha$ by $sub(\alpha)$. 

\begin{definition}[Nmatrix]\label{def:nmatrix}
%An {\em  Nmatrix} for $\Lan$ is a tuple $\M = \langle \V,\D,\Om\rangle$, where:
An {\em  Nmatrix} for $\Lan$ is a tuple $\ML{\Lan} = \langle \V,\D,\Om\rangle$, where:
\begin{itemize}
\item $\V$ is a non-empty set of truth-values.
\item $\D$ (designated truth-values) is a non-empty proper subset of $\V$.
\item For every n-ary connective $\conn$ in $\Lan$, $\Om$ includes a  non-deterministic truth-function $\tilde\conn: \V^n \to \wp(\V)\backslash\varnothing$.
%\V^n \to 2^{\V}\backslash\varnothing$.
\end{itemize}
%We say that P is (in)finite if so is V.
\end{definition}

\begin{definition}[Valuation]\label{def:val}
Let $\ML{\Lan} = \langle \V,\D,\Om\rangle$ %$\M = \langle \V,\D,\Om\rangle$ 
be an Nmatrix and $\Lambda\subseteq \ForL{\Lan}$  closed under subformulas.
A {\em partial valuation} in $\ML{\Lan}$ is a function $\vv:\Lambda\to \V$ such that, for each n-ary connective $\conn$ in  $\Lan$, the following holds for all $\alpha_1, \ldots, \alpha_n\in\Lambda$:
$
   v (\conn (\alpha_1,\ldots, \alpha_n)) \in \tilde\conn (v(\alpha_1), \ldots, v(\alpha_n))
   $.
A partial valuation in $\ML{\Lan}$ is a (total) {\em valuation} if its domain is $\ForL{\Lan}$. 

Given an Nmatrix $\ML{\Lan}$,  $Val(\ML{\Lan})$ is  the set of {\em all valuations} of $\ML{\Lan}$.
\end{definition}

%\begin{remark}
%Given an Nmatrix $\mathbb{M}$, we denote the set of all valuations of $\mathbb{M}$ by $Val(\mathbb{M})$.
%\end{remark}

It turns out that a valuation function makes no distinction between the different outputs of a non-deterministic truth-function. Some logics accommodate this, such as classical logic or some modal logics without the rule of necessitation, while others do not. In particular, intuitionistic logic, normal modal logics and da Costa's paraconsistent logic $C_1$ cannot be characterized by finite Nmatrices (see~\cite{leme2025newIPL,DBLP:journals/logcom/Gratz22} and~\cite{DBLP:journals/ijar/Avron07} respectively). 
%it is known that all the logics covered by our tool $\tool$ 
This issue can be addressed, while preserving finiteness of the set of truth-values, by refining the set of valuations.

RNmatrix, formally introduced in~\cite{coniglio2022two}, 
controls the outputs of Nmatrix in cases where blind handling of non-determinism is not possible. In such cases, contextual information imposes a revision of the valuation assignment, otherwise unsoundness arises. As has been shown by recent works (see \eg~\cite{EConiglio2025-CONRFM}), several logics that are not characterizable by finite Nmatrices can instead be captured by imposing suitable constraints on the outputs of non-deterministic truth-functions.

\begin{definition}[RNmatrix]\label{def:rnmat}
    An RNmatrix is a pair $\mathcal{R}(\ML{\Lan}) = \langle \mathcal{F},
\ML{\Lan}\rangle$ where $\ML{\Lan}$ is an Nmatrix and $\mathcal{F} \subseteq
Val(\ML{\Lan})$ is a subset of the set of valuations in $\ML{\Lan}$. 
\end{definition}

For every RNmatrix considered in this paper, the set $\mathcal{F}$ can be specified as a sentence in first-order logic. This will be useful for the encoding presented in~\Cref{sec:encoding}.  Currently, our tool $\tool$ (Theorem prover for RNmatrices)~\cite{tool}  specifies the RNmatrices of modal logic $\mSfour$, intuitionistic propositional logic ($\ILp$), and the family of paraconsistent logics $\cn$. In what follows, we briefly present each of them.

%Intuitively, the idea is that the valuations defined by the Nmatrix form the basis for the logic at hand, while the restriction identifies the precise subset of valuations that characterize it. In the following section, the restriction will be formalized in a first order logic. %Such a distinction motivates the introduction of a restricted consequence relation.

% \begin{definition}[Semantic consequence]\label{def:conseq} Let $\Gamma\cup \{ A \}$ be a set of formulas and $\mathbb{M} = \langle \mathcal{V}, \mathcal{D}, \mathcal{O} \rangle$ an Nmatrix. $\Gamma \models^{\mathbb{M}} A$ if and only if, for every $v \in Val(\mathbb{M})$, if $v(\Gamma) \in D$, then $v(A) \in D$. We use $\models^{\mathbb{M}}_{\Delta}$ to denote the consequence relation restricted to a subset $\Delta \subseteq Val(\mathbb{M})$.
% \end{definition}

\begin{table}
\centering
\resizebox{\textwidth}{!}{
\begin{tabular}{llll}
\hline
Logic & Language & Truth-values & Designated values \\
\hline

$\mSfour$ &
$\alpha ::= p \mid \bot \mid \top \mid \neg \alpha \mid \Box\alpha  \mid \alpha \to \alpha \mid \alpha\land\alpha \mid \alpha\lor\alpha$ &
$\V_{\mSfour}=\{0, 1, 2\}$ &
$\D_{\mSfour}=\{1,2\}$ \\[6pt]

%$\mathbb{M}_{\mSfive} $ &
%$\alpha ::= p \mid \neg \alpha \mid \Box\alpha \mid \Diamond \alpha \mid \alpha \to \alpha \mid \alpha\land\alpha \mid \alpha\lor\alpha$ &
%$\V_{\mSfive}=\{\vT,\vt,\vf,\vF\}$ &
%$\D_{\mSfive}=\{\vT,\vt\}$ \\[6pt]

$\ipl$ &
$\alpha ::= p \mid \bot \mid \top \mid \neg \alpha \mid \alpha \to \alpha \mid \alpha\land\alpha \mid \alpha\lor\alpha$ &
$\V_{\ipl}=\{\vT,\vF\}$ &
$\D_{\ipl}=\{\vT\}$ \\[6pt]

$\C{n}$ &
$\alpha ::= p \mid \neg\alpha \mid \alpha \to \alpha \mid \alpha\land\alpha \mid \alpha\lor\alpha$ &
$\V_{\cn}=\{\vTn,\vtn{0},\ldots,\vtn{n-1},\vFn\}$ &
$\vDn=\V_{\cn}\setminus\{\vFn\}$ \\

\hline
\end{tabular}
}
\caption{Nmatrices considered in this paper. Multifunctions are given in
Tables~\ref{table:CnNmatrices}, \ref{table:IPLNmatrices} and \ref{table:modalNmatrices}.}
\label{table:allNmatrices}
\end{table}

\subsection{da Costa's family $\cn$}\label{mat:cn}

In classical logic, inconsistent theories are trivial, since the {\em explosion principle} $\alpha \to (\neg\alpha \to \beta)$ allows arbitrary conclusions to be derived from a contradiction.
By contrast, paraconsistent approaches reject explosion, allowing inconsistencies to be handled in a non-trivial way.
This is particularly relevant in contexts such as databases, which may contain conflicting information about the same entity while still preserving meaningful content, or in modeling the set of {\em beliefs} of an agent: it is reasonable to allow both truth and falsity to coexist without collapse, since agents may hold inconsistent beliefs without thereby accepting everything.

The family of logics $C_n$ (for $1 \le n < \omega$) constitutes a hierarchy of paraconsistent systems introduced by Newton da Costa~\cite{daCosta1993Sistemas}.  These systems are designed to progressively constrain the principle of explosion  by increasingly weakening the role of the principle of non-contradiction. The systems are ordered by strict inclusion, moving from the most rigid (classical logic) to increasingly permissive paraconsistent systems:
\[
C_\omega \subset \ldots \subset C_{n+1} \subset C_{n} \subset  \ldots \subset C_2 \subset C_1 \subset C_0 \cong  {\CLp}
\]
where $C_0$ corresponds to classical propositional logic $\CLp$. In this hierarchy, each logic $C_{i+1}$ is strictly weaker than $C_i$ because it tolerates a higher ``depth'' of contradiction before collapsing into triviality.

To formalize this depth, the degree of consistency of a formula $\alpha$ is defined recursively as
$\alpha^0 := \alpha$ and, for $n \geq 0$:
\[
\alpha^{n+1} := \neg(\alpha^n \land \neg \alpha^n)
\]
We define the $k$-depth of contradiction as $\mathtt{c}_k(\alpha) \equiv \alpha^k \land \neg \alpha^k$. Essentially, $\mathtt{c}_0$ represents a simple contradiction ($\alpha \land \neg \alpha$), while $\mathtt{c}_1$ represents a contradiction regarding the law of non-contradiction itself ($\alpha^1 \land \neg \alpha^1$).

As noted in~\cite{CarnielliConiglio2016mbC}, the system $C_n$ is calibrated to prevent explosion for any contradiction depth $k$ less than $n$. However, it remains ``explosive'' once that specific threshold is reached:

\begin{remark}\label{remark:cn}
    For every $n > 0$ and any formulas $\alpha, \beta$:
    \[
    \mathtt{c}_{k}(\alpha) \not\vdash_{C_n} \beta \quad \text{for } 0 \le k < n
    \]
    \[
    \mathtt{c}_{k}(\alpha) \vdash_{C_n} \beta \quad \text{for } k \geq n
    \]
\end{remark}
This illustrates that $C_n$ behaves paraconsistently at lower levels of contradiction, but retains a ``classical core'' that triggers explosion when the contradiction reaches the $n$-th level of nested consistency.

The $\cn$-bivaluations proposed by da Costa and Alves in~\cite{da1977semantical} and fully developed by Lopari\'{c} and Alves in~\cite{loparic1980semantics} are employed by Coniglio and Toledo in~\cite{coniglio2022two} to obtain a decision method for the $\cn$ hierarchy based on restricted finite non-deterministic matrices. To this end, each truth-value of a formula $\alpha$ in $\cn$ is defined as a tuple (called a snapshot), where each position encodes the bivaluation of $\alpha$, $\neg \alpha$, and of $\alpha^1,\ldots,\alpha^n$. Formally, let $\mathtt{b}$ be a $\cn$-bivaluation. Then, a truth-value is an $(n+2)$-tuple, where
\[
\langle \mathtt{b}(\alpha), \mathtt{b}(\neg \alpha), \mathtt{b}(\alpha^1), \ldots,  \mathtt{b}(\alpha^n) \rangle
\]

\begin{table}[t]
\centering
\small
\setlength{\tabcolsep}{4pt}

\vspace{6pt}
\resizebox{\textwidth}{!}{
\begin{minipage}{0.18\textwidth}
\centering
\begin{tabular}{c|c}
$x$ & $\neg^{\cn} x$ \\ \hline
$\vTn$ & $\vFn$ \\ 
$\vti$ & $\vDn$ \\
$\vFn$ & $\vTn$ \\
\end{tabular}
\end{minipage}
\hspace{0.8em}
\begin{minipage}{0.24\textwidth}
\centering
\begin{tabular}{c|ccc}
$x\to^{\cn}y$ & $\vTn$ & $\vtj$ & $\vFn$ \\ \hline
$\vTn$ & $\vTn$ & $\vDn$ & $\vFn$ \\ 
$\vti$ & $\vDn$ & $\vDn$ & $\vFn$ \\ 
$\vFn$ & $\vTn$ & $\vDn$ & $\vTn$ \\
\end{tabular}
\end{minipage}
\hspace{0.8em}
\begin{minipage}{0.24\textwidth}
\centering
\begin{tabular}{c|ccc}
$x\land^{\cn}y$ & $\vTn$ & $\vtj$ & $\vFn$ \\ \hline
$\vTn$ & $\vTn$ & $\vDn$ & $\vFn$ \\
$\vti$ & $\vDn$ & $\vDn$ & $\vFn$ \\
$\vFn$ & $\vFn$ & $\vFn$ & $\vFn$ \\ 
\end{tabular}
\end{minipage}
\hspace{0.8em}
\begin{minipage}{0.24\textwidth}
\centering
\begin{tabular}{c|ccc}
$x\lor^{\cn}y$ & $\vTn$ & $\vtj$ & $\vFn$ \\ \hline
$\vTn$ & $\vTn$ & $\vDn$ & $\vTn$ \\ 
$\vti$ & $\vDn$ & $\vDn$ & $\vDn$ \\ 
$\vFn$ & $\vTn$ & $\vDn$ & $\vFn$ \\
\end{tabular}
\end{minipage}
}
\caption{Non-deterministic truth-functions for $\mathbb{M}_{\cn}$. }\label{table:CnNmatrices}
\end{table}

In general, each $\mathbb{M}_{\cn}$ contains $n$ inconsistent truth-values together with two consistent (classical) ones. For instance, $\mathbb{M}_{C_1}$ has three truth-values: $T_1 = \langle 1,0,1 \rangle$, $F_1 = \langle 0,1,1 \rangle$, and $t^1_0 = \langle 1,1,0 \rangle$. \Cref{table:allNmatrices} presents the language, the set of truth-values and the respective designated values of each $\cn$. The non-deterministic truth-functions for each $\cn$ is presented in~\Cref{table:CnNmatrices}.

Now suppose that $v(\alpha) = t^{1}_0$ in $\mathbb{M}_{C_1}$. This valuation represents a situation in which both $\alpha$ and $\neg \alpha$ are true at the level of the underlying bivaluation. Still, the truth-value assigned to $\alpha \land \neg \alpha$ is non-deterministic, branching between $T_1$ and $t^1_0$. As a consequence, some resulting valuations validate the principle of non-contradiction (see~\Cref{table:exc1}). Such valuations are incompatible with the intended bivaluation semantics captured by the snapshot and should therefore be excluded. To solve this problem, the restriction selects only the valuations $v'$ such that $v'(\alpha \land \neg \alpha) = T_1$. This reasoning is then generalized to all the logics in the hierarchy as follows, where $I_n := \{ t^n_i \mid 0 \leq i < n \}$.

\begin{table}
\centering
\begin{tabular}{cccc}
\toprule
$\alpha$ & $\neg \alpha$ & $\alpha \land \neg\alpha$ & $\alpha^1$ \\
\midrule
$t^1_0$ & $T_1$   & $T_1$   & $F_1$ \\
$t^1_0$ & $T_1$   & $t^1_0$ & $\{T_1, t^1_0\}$ \\
$t^1_0$ & $t^1_0$ & $T_1$   & $F_1$ \\
$t^1_0$ & $t^1_0$ & $t^1_0$ & $\{T_1, t^1_0\}$ \\
\bottomrule
\end{tabular}
\caption{Example of Table in $\mathbb{M}_{C_1}$.}\label{table:exc1}
\end{table}

\begin{definition}[$\mathcal{F}_{\cn}$]\label{def:rcn}
    $\mathcal{F}_{\cn}$ is the set of every  $v \in Val(\mathbb{M}_{\cn})$ such that:
    \begin{enumerate}
        \item If $v(\alpha) = \vtn{0}$, then $v(\alpha \land \neg \alpha) = T_n$;
        \item For every $1 \leq k < n$, if $v(\alpha) = \vtn{k}$, then $v(\alpha \land \neg \alpha) \in I_n$ and $v(\alpha^1) = \vtn{k-1}$.
    \end{enumerate}
\end{definition}

\begin{definition}[RNmatrix for $\cn$]
    Given $n>0$, $\mathcal{R}(\mathbb{M}_{\cn}) = \langle \mathcal{F}_{\cn}, \mathbb{M}_{\cn} \rangle$ is the RNmatrix for $\cn$.
\end{definition}

\subsection{$\ipl$}\label{sec:ipl}

G\"{o}del showed in~\cite{Goedel32} that rejecting the principle of excluded middle entails that no single finite logical matrix characterizes $\ipl$. The same holds for finite non-deterministic matrices, as shown in~\cite{leme2025newIPL}. However, as with the paraconsistent logics discussed above, this limitation can be overcome by imposing suitable restrictions on an appropriate Nmatrix. To this end, Leme, Coniglio, and Lopes~\cite{leme2025newIPL} introduced a $3$-valued RNmatrix for $\ipl$ based on the truth-values \emph{proved} ($\vT$), \emph{not proved} ($\vF$) and \emph{neither proved nor refuted} ($\vU$). It turns out that this semantics can be reduced to a $2$-valued equivalent semantics.
$\tool$ implements the $2$-valued version, which we now briefly describe.

In this semantics, \emph{proved} ($\vT$) and \emph{not proved} ($\vF$) are taken as the only primitive values. The status of the components determines the value of a complex proposition. For instance, if $\alpha$ is proved and $\beta$ is not proved, then under a constructive reading of conjunction and disjunction, $\alpha \lor \beta$ is proved, whereas $\alpha \land \beta$ is not proved. Moreover, since $\alpha$ is proved while $\beta$ is not, $\alpha \to \beta$ is not proved. This yields the non-deterministic truth-functions shown in~\Cref{table:IPLNmatrices}.

Note that, when both $\alpha$ and $\beta$ are not proved, the value of the complex formula $\alpha \to \beta$ is not fully determined. In this situation, we cannot decide whether $\alpha \to \beta$ is proved. For instance, let $\alpha$ be a conjecture and $\beta = \bot$. By assumption, we have no proof of $\alpha$ nor of $\beta$. Moreover, we have no proof of $\alpha \to \bot$, since $\alpha$ is not refuted either. Thus, the truth-value of $\alpha \to \beta$ remains open. Accordingly, the truth-function allows both $\vF$ and $\vT$ as possible values for $\alpha \to \beta$. The resulting Nmatrix, however, is weaker than $\ipl$. To recover the intended strength, we impose a restriction on the set of valuations.

\begin{table}[t]
\centering
\small
\setlength{\tabcolsep}{4pt}
\resizebox{\textwidth}{!}{
\begin{tabular}{c@{\qquad}c@{\qquad}c@{\qquad}c@{\qquad}c}
\multicolumn{5}{c}{} \\[4pt]

\begin{tabular}{c|c}
$\bot^{\ipl}$ & $\top^{\ipl}$\\ \hline
$\vF$ & $\vT$ \\ 
\end{tabular}
&
\begin{tabular}{c|c}
$x$ & $\neg^{\ipl} x$ \\\hline
$\vF$ & $\vF,\vT$  \\ 
$\vT$ & $\vF$ \\ 
\end{tabular}
&
\begin{tabular}{c|cc}
$x\to^{\ipl}y$ & $\vF$ & $\vT$ \\\hline
$\vF$ & $\vF,\vT$ & $\vT$ \\ 
$\vT$ & $\vF$ & $\vT$ \\ 
\end{tabular}
&
\begin{tabular}{c|cc}
$x\lor^{\ipl}y$ & $\vF$ & $\vT$ \\ \hline
$\vF$ & $\vF$ & $\vT$ \\
$\vT$ & $\vT$ & $\vT$ \\ 
\end{tabular}
&
\begin{tabular}{c|cc}
$x\land^{\ipl}y$ & $\vF$ & $\vT$ \\ \hline
$\vF$ & $\vF$ & $\vF$ \\ 
$\vT$ & $\vF$ & $\vT$ \\ 
\end{tabular}
\\[10pt]

% \multicolumn{4}{c}{} \\[4pt]

% \begin{tabular}{|c|}
% \hline
% $\bot^{\ipl}_3$ \\ \hline
% $\vF$ \\ \hline
% \end{tabular}
% &
% \begin{tabular}{|c|c|c|c|}
% \hline
% $z\to^{\ipl}_3y$ & $\vF$ & $\vU$ & $\vT$ \\ \hline
% $\vF$ & $\vU,\vT$ & $\vU,\vT$ & $\vT$ \\ \hline
% $\vU$ & $\vU,\vT$ & $\vU,\vT$ & $\vT$ \\ \hline
% $\vT$ & $\vF$ & $\vF$ & $\vT$ \\ \hline
% \end{tabular}
% &
% \begin{tabular}{|c|c|c|c|}
% \hline
% $z\lor^{\ipl}_3y$ & $\vF$ & $\vU$ & $\vT$ \\ \hline
% $\vF$ & $\vF$ & $\vF$ & $\vT$ \\ \hline
% $\vU$ & $\vF$ & $\vF$ & $\vT$ \\ \hline
% $\vT$ & $\vT$ & $\vT$ & $\vT$ \\ \hline
% \end{tabular}
% &
% \begin{tabular}{|c|c|c|c|}
% \hline
% $z\land^{\ipl}_3y$ & $\vF$ & $\vU$ & $\vT$ \\ \hline
% $\vF$ & $\vF$ & $\vF$ & $\vF$ \\ \hline
% $\vU$ & $\vF$ & $\vF$ & $\vF$ \\ \hline
% $\vT$ & $\vF$ & $\vF$ & $\vT$ \\ \hline
% \end{tabular}

\end{tabular}
}

\caption{Non-deterministic truth-functions for $\mathbb{M}_{\ipl}$.}
\label{table:IPLNmatrices}
\end{table}

\paragraph{Restricting the values} Constructively, an implication is false whenever there exists a proof of $\alpha$ that cannot be transformed into a proof of $\beta$. This corresponds to a valuation $w$ such that $w(\alpha) = \vT$ and $w(\beta) = \vF$. Recall that the non-deterministic case arises when neither $\alpha$ nor $\beta$ is proved. In such a situation, one cannot immediately conclude that $\alpha \to \beta$ is false. However, if there exists another valuation in which $\alpha$ is proved while $\beta$ is not, then this valuation can be identified by inspecting the matrix and used to justify that $\alpha \to \beta$ is false in the original valuation via monotonicity. 

\begin{remark}[Monotonicity]
    If $\Gamma, \Delta \not\vdash \gamma$, then $\Gamma \not\vdash \gamma$.
\end{remark}

To this end, the only requirement is that the new valuation may introduce additional proofs ($\Delta$), but must agree with the original valuation on all propositions already proved ($\Gamma$). Intuitively, this reflects the idea that mathematical knowledge is persistent and open-ended. Given that $\neg \alpha$ is defined as $\alpha \to \bot$, a similar reasoning is applied to negation.

\begin{definition}[$\mathcal{L}^{\ipl}$]\label{def:ripl}
    Let $\mathcal{L}^{\ipl}_0 = Val(\mathbb{M}_{\ipl})$. $\mathcal{L}^{\ipl}_{k+1}$ is the set of every $v \in \mathcal{L}^{\ipl}_{k}$ such that, for every $\alpha,\beta$,
    \begin{description}
        \item[$\to$] If $v(\alpha)=v(\beta)=v(\alpha\to \beta) = \vF$, then there is $w \in \mathcal{L}^{\ipl}_{k}$ such that $w(\alpha) = \vT$ and $w(\beta) = \vF$ and, for every $\gamma$, $v(\gamma) = \vT$ implies $w(\gamma) = \vT$.
        \item[$\neg$]  If $v(\alpha)=v(\neg \alpha)= \vF$, then there is $w \in \mathcal{L}^{\ipl}_{k}$ such that $w(\alpha) = \vT$ and, for every $\gamma$, $v(\gamma) = \vT$ implies $w(\gamma) = \vT$.
    \end{description}
    The restricted set of valuations is then defined as follows:
    \[
    \mathcal{L}^{\ipl} = \bigcap^{\infty}_{i=0} \mathcal{L}^{\ipl}_i
    \]
\end{definition} 

In contrast to the restriction defined for $\mathbb{M}_{\cn}$ in the previous section, the restriction for $\mathbb{M}_{\ipl}$ requires the existence of an additional valuation. Nevertheless, this existence can be verified locally.

Note that the criterion is applied backward rather than forward.  Given any formula $\gamma$, the values that can be assigned to $v(\gamma)$ (i.e., the entries in the column of $\gamma$ in the Nmatrix) depend only on the values of its subformulas. On the one hand, it is not difficult to see that if the existence criterion fails for all partial valuations whose domain is restricted to the set of subformulas of $\gamma$, then it also fails for all extensions of such partial valuations. In other words, if no suitable partial valuation exists locally, then no valuation exists globally. On the other hand, if the existence of a valuation can be verified locally, then, by the analyticity result in~\cite{leme2025newIPL}, this valuation can always be extended to a full valuation. Therefore,~\cref{def:ripl} supports decision procedures.

\begin{remark}
It is worth noting that the valuation functions in the restricted set can be interpreted as worlds in Kripke semantics, both in $\ipl$ as in $\mSfour$. From this perspective, the existence criteria induce an accessibility relation between different possible worlds. This interpretation is secondary but useful, as it imports known results for each logic, for example, regarding bounds on the models.
\end{remark}

% \begin{remark}
%     Given a formula $\alpha$, we denote the set of every subformula of $\alpha$ by $sub(\alpha)$. 
% \end{remark}

\begin{definition}[RNmatrix for $\ipl$]
    The RNmatrix for $\ipl$ is given by $\mathcal{R}(\mathbb{M}_{\ipl}) = \langle \mathcal{L}^{\ipl}, \mathbb{M}_{\ipl} \rangle$.
\end{definition}

\subsection{$\mSfour$}\label{sec:modal}

$\mSfour$ is a modal logic characterized by the following principles. First, $\Box$ is distributive over implication: $\Box (\alpha \to \beta) \to (\Box \alpha \to \Box \beta)$ (axiom $\mk$). Second, tautologies are necessarily true: from $\alpha$ infer $\Box \alpha$ (rule of necessitation). Third, necessity implies truth: $\Box \alpha \to \alpha$ (axiom $\mt$). Finally, $\mSfour$ collapses iterated necessity, as necessity is necessarily necessary: $\Box \alpha \to \Box \Box \alpha$ (axiom $\mfour$).

Modalities, such as $\Box$, add a new layer of complexity to the analysis of the concept of truth. Under the alethic interpretation, a proposition may be not only true but necessarily true, as in the case of mathematical truths; it may also be contingently true, as in everyday situations. Falsity can also be qualified: a proposition may be contingently false, but it can also be impossible. From a many-valued perspective, this suggests that \emph{truth} can be split into several different notions, such as necessarily true, impossibly false, or contingently true or false. In this perspective, different modal logics may accept or reject different truth-values. For example, $\mSfour$ would accept a truth-value that means true and necessarily true, but not a value meaning false and necessarily true.

Nonetheless, since $\ipl$ cannot be characterized by finite Nmatrices but can be translated into $\mSfour$, the same limitation applies to $\mSfour$. In the modal case, the problem stems from the rule of necessitation: for every Nmatrix that is complete with respect to $\mSfour$, say $\mathbb{M}_{\mSfour}^*$, there exists a formula $\alpha$ and a valuation $v \in Val(\mathbb{M}_{\mSfour}^*)$ such that $\alpha$ is a tautology, yet $v(\Box \alpha)$ is non-designated. In other words, the difficulty once again lies in establishing the soundness of the system. Kearns addressed this problem by defining validity in terms of a restricted set of valuations using a technique now known as \emph{level valuations} (see~\cite{DBLP:journals/jsyml/Kearns81}). However, level valuations do not directly yield a decision procedure. In general, decidability depends on the restriction criteria.

\begin{table}[!t]
\centering
\resizebox{\textwidth}{!}{
\begin{tabular}{cccccc}
\begin{tabular}{c|c}
$\bot^{\mSfour}$ & $\top^{\mSfour}$\\ \hline
$0$ & $2$ \\ 
\end{tabular}
&
\begin{tabular}{c|ccc}
$x \to^{S4} y$ & 2 & 1 & 0 \\\hline
2 & $2$ & $1$ & $0$ \\
1 & $2$ & $1,2$ & $0$ \\
0 & $2$ & $1,2$ & $1,2$
\end{tabular}
&
\begin{tabular}{c|ccc}
$x \land^{S4} y$ & 2 & 1 & 0 \\\hline
2 & $2$ & $1$ & $0$ \\
1 & $1$ & $1$ & $0$ \\
0 & $0$ & $0$ & $0$
\end{tabular}
&
\begin{tabular}{c|ccc}
$x \lor^{S4} y $ & 2 & 1 & 0 \\\hline
2 & $2$ & $2$ & $2$ \\
1 & $2$ & $1,2$ & $1,2$ \\
0 & $2$ & $1,2$ & $0$
\end{tabular}
&
\begin{tabular}{c|c}
$x$ & $\neg^{S4} x$ \\\hline
2 & $0$ \\
1 & $0$ \\
0 & $1,2$
\end{tabular}
&
\begin{tabular}{c|c}
$x$ & $\Box^{S4} x$ \\\hline
2 & $2$ \\
1 & $0$ \\
0 & $0$
\end{tabular}

\end{tabular}
}
\caption{Non-deterministic truth-functions for $\mathbb{M}_{\mSfour}$.}\label{table:modalNmatrices}
\end{table}

The first decision procedure based on level valuations for $\mSfour$ was proposed by Gr\"atz in~\cite{DBLP:journals/logcom/Gratz22}. In this work, he defines a decidable notion of partial level valuation, which is applied to a three-valued Nmatrix, where the primitive truth-values are necessarily true ($2$), contingently true ($1$), and false ($0$). \Cref{table:allNmatrices} defines the language of the system, as well as the truth-values, while \Cref{table:modalNmatrices} defines the corresponding non-deterministic truth-functions\footnote{The truth-functions for conjunction and disjunction are due to~\cite{leme2025newIPL}.}. The restricted set of valuations can be defined as follows.

\begin{definition}[$\mathcal{L}^{\mSfour}$]\label{def:rsfour}
    Let $\mathcal{L}^{\mSfour}_0 = Val(\mathbb{M}_{\mSfour})$. $\mathcal{L}^{\mSfour}_{k+1}$ is the set of every $v \in \mathcal{L}^{\mSfour}_{k}$ such that, if $v(\alpha)=1$, then there is $w \in \mathcal{L}^{\mSfour}_{k}$ such that $w(\alpha) = 0$ and, for every $\beta$, $v(\beta) = 2$ implies $w(\beta) = 2$. Then,
    \[
    \mathcal{L}^{\mSfour} = \bigcap^{\infty}_{i=0} \mathcal{L}^{\mSfour}_i
    \]
\end{definition}

As for $\ipl$, the existence criteria can be decided locally and then generalized to all levels. In the case of $\mathbb{M}_{\mSfour}$, the dependency is created by the value $1$, which is designated but contingent. In the intended model, $v(\alpha) = 1$ means that $\alpha$ is true but possibly false. Now, if there is no valuation $w$ such that $w(\alpha) = 0$, then $\alpha$ is not contingently true but rather necessarily true. On the other hand, if there is such a $w$, then $v$ and $w$ must be consistent with both the rule of necessitation and axiom $\mfour$. In other words, if $v(\alpha) = 2$ (meaning that $\alpha$ is necessarily true), then $w(\alpha)$ must be designated and necessary. Hence, $v(\alpha) = 2$ implies $w(\alpha) = 2$.

\begin{definition}[RNmatrix for $\mSfour$]
    The RNmatrix for $\mSfour$ is given by $\mathcal{R}(\mathbb{M}_{\mSfour}) = \langle \mathcal{L}^{\mSfour}, \mathbb{M}_{\mSfour} \rangle$.
\end{definition}

\section{Tautologies in  RNmatrices as SMT Problems}\label{sec:encoding}
%!TEX root = main.tex
% !TEX spellcheck = en-US

This section presents an encoding of the validity problem for formulas in the
non-classical logics introduced in the previous section (intuitionistic, modal,
and paraconsistent) as an SMT problem. This encoding yields a new suite of
automated theorem provers for these logics that we benchmark in the next
section. The translation is performed automatically by our tool \tool, a
matrix-based theorem prover. In practice, \tool~is not a theorem prover itself,
but rather a translator: it takes as input a formula expressed in the given
logic using the \href{https://www.tptp.org/}{TPTP} format,  and it produces a
 file in the \href{https://smt-lib.org/}{SMT-LIB} format, which
can then be fed to any SMT solver.

In the rest of this section, we fix an arbitrary RNmatrix $\rnmatrix = \langle \cF, \ML{\Lan}\rangle$, assumed to be sound and complete (\ie, a formula is valid in $\Lan$ if and only if it is a tautology with respect to $\rnmatrix$).
%We also assume $\alpha$ to be an arbitrary formula in $\Lan$. 

The encoding of the validity problem for the formula \(\alpha\), using 
$\rnmatrix$, starts 
with the definition of a suitable multi-sorted signature. The distinct truth
values of the RNmatrix give rise to a sort \(\sortval\), populated with
pairwise distinct constants, one for each truth value. We also assume a sort
\(\sortrow\) for the rows and a constant $r_0$ of this sort, representing an arbitrary row in the matrix. 
Moreover, we consider the sort \(\sortcol\) for 
columns, populated with constants corresponding to each
subformula of the formula $\alpha$. 
Finally, we assume an uninterpreted function $mat$ of sort
$\sortrow\times \sortcol \to \sortval$ representing the 
matrix.

\begin{definition}[Signature]\label{def:sig}
Let $\alpha$ be a formula in $\Lan$ and $\rnmatrix$ an RNmatrix for $\Lan$. 
The associated first-order signature $\Sigma^{\alpha}_{\rnmatrix}$ includes: 
    Three sorts, namely $\sortval$, $\sortrow$, and $\sortcol$; 
    a constant $r_0$ of sort $\sortrow$;
    the constants $\{ c_{tv} \mid tv \in \V \}$ of sort $\sortval$; 
    the constants $\{c_{\beta} \mid \beta \in sub(\alpha)\}$ of sort $\sortcol$; 
    the (uninterpreted) function $mat$ of sort $\sortrow\times \sortcol \to \sortval$; and 
    the usual symbol $=$ for equality.
 \end{definition}

\paragraph{Goal and Constraints.} The idea is to check whether the (arbitrary)
row $r_0$  falsifies the formula $\alpha$ being tested. That is, if $mat(r_0,c_{\alpha})=
c_{tv}$ for a non-designated truth value $tv \in \V\setminus \D$. For that, we
have to impose constraints to: 1) guarantee that the truth values are all
pairwise distinct; 2) the value of $mat(r,c_\beta)$, for any row $r$,  
is allowed by the matrix according to the truth values associated to its subformulas; 
 and 3) the rows of the matrix are ``valid'',  \ie,
they are in the set $\cF \subseteq Val(\ML{\Lan})$.

\begin{definition}[Constraints]\label{def:cons}
Let $\alpha$ be a formula in $\Lan$, $\rnmatrix$ an RNmatrix for $\Lan$, 
and 
$\Sigma^{\alpha}_{\rnmatrix}$ as in~\Cref{def:sig}. 
We define  $\Phi^{\alpha}_{\rnmatrix} \defsym \Phi_{r_0} \wedge \Phi_{tv} \wedge \Phi_{C} \wedge \Phi_{M} \wedge \Phi_{F}$ where:
\begin{itemize}
    \item $\Phi_{r_0}\defsym \bigvee\limits_{tv\in \V\setminus \D }mat(r_0, c_{\alpha}) = c_{tv}$;
    \item $\Phi_{tv} \defsym  \bigwedge\{ \neg (c_{tv} = c_{tv'})\mid tv, tv' \in \V \mbox{ and } tv \neq tv'\}$;
    \item $\Phi_{C} \defsym  \bigwedge\{ \neg (c_{\beta} = c_{\beta'})\mid \beta, \beta' \in sub(\alpha) \mbox{ and } \beta \neq \beta'\}$;
    \item $\Phi_{M} \defsym  \forall r:\sortrow.\bigwedge\limits_{\beta\in sub(\alpha)}(cons_\alpha(\beta,r))$ where $cons_\alpha(\beta,r)$ is defined as:
\[
    \begin{array}{lll}
        cons_\alpha(\beta,r) & \defsym &  \left\{
        \begin{array}{l}
            \bigvee\limits_{tv\in \V} mat(r,c_\beta)= c_{tv}  \mbox{ (which is equivalent to $true$) if  $\beta$ is atomic  }\\
            f_\star(mat(r,c_\gamma), mat(r,c_\beta)) \mbox{ if $\beta = \star \gamma$ for an unary connective $\star$}\\
            f_\star(mat(r,c_\gamma), mat(r,c_\delta), mat(r,c_\beta)) \mbox{ if $\beta = \gamma \star \delta$ for a binary connective $\star$}
        \end{array}
        \right. \\
            f_\star(x,y)   &  \defsym  &\bigwedge\limits_{tx\in \V}( x = c_{tx} \Rightarrow \bigvee\limits_{ty\in \tilde\star(tv)} y = c_{ty}) \\
            f_\star(x,y,z) &  \defsym  &\bigwedge\limits_{tx,ty \in \V} ((x = c_{tx} \wedge y = c_{ty}) \Rightarrow \bigvee\limits_{tz\in \tilde\star(tx,ty)} z = c_{tz} )
    \end{array}
\]
\item $\Phi_F$ is a formula of the form $\forall r:\sortrow.\Phi'$ where $\Phi'$ is the first order formula
    characterizing the criterion $\cF$. 

\end{itemize}
\end{definition}

In the case of atomic propositions, the corresponding columns 
may take any truth value in $\V$.
For unary conectives, $f_\star(x,y)$ imposes the constraint
asserting that the possible values of $y$ are determined
by those of $x$ according to the  RNmatrix. 
For instance, for the paraconsistent logic $C_1$, 
 $f_{\neg}$ is defined as:
\[
    f_{\neg}(x,y) \defsym (x=c_F \Rightarrow y=c_T) \wedge (x=c_t \Rightarrow (y=c_t \vee y=c_T)) \wedge (x=c_T \Rightarrow y=c_F)
\]
Similarly, for the binary connectives, we define suitable 
functions enforcing the corresponding values for the RNmatrix. For instance, 
after some simplifications, we  define the restriction for disjunction as:
\[
    f_{\vee}(x,y,z) \defsym ( (x=c_t \vee y=c_t)\Rightarrow z=c_t) \wedge ((x=c_T \vee y=c_T)\Rightarrow z=c_T) \wedge
    ((x=c_F \wedge y=c_F)\Rightarrow z=c_F)
\]

The formula $\Phi_F$ for $C_1$  asserts that, for each row, 
a $t$ value in a column $\beta$ implies that the column $\beta\wedge \neg \beta$ necessarily takes the value $T$: 
\begin{equation}\label{eq:c1-correct}
    \forall r:\sortrow.\bigwedge\limits_{(\beta\wedge\neg \beta) \in subs(\alpha)} (mat(r,c_\beta) =c_t) \Rightarrow mat(r,c_{(\beta\wedge\neg \beta)}) = c_T
\end{equation}

\begin{theorem}[Correctness]
Let $\alpha$ be a formula in $\Lan$, $\rnmatrix$ an RNmatrix for $\Lan$
and 
$\Phi^{\alpha}_{\rnmatrix}$ be as in~\Cref{def:cons}. 
Then, $\alpha$ is valid in $\Lan$ iff $\Phi^{\alpha}_{\rnmatrix}$ is unsatisfiable in FOL. 
\end{theorem}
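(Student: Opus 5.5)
We prove the contrapositive in both directions: $\alpha$ is \emph{not} valid in $\Lan$ iff $\Phi^{\alpha}_{\rnmatrix}$ is satisfiable. The standing assumption that $\rnmatrix$ is sound and complete reduces ``$\alpha$ not valid'' to ``there is $v\in\cF$ with $v(\alpha)\notin\D$''. The whole argument is a translation between first-order models of $\Sigma^{\alpha}_{\rnmatrix}$ and sets of partial valuations on $sub(\alpha)$.

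\emph{From a countervaluation to a model.} Fix $v\in\cF$ with $v(\alpha)\notin\D$. We build a structure $\mathcal{A}$ as follows.
\begin{itemize}
\item Interpret $\sortval$ as $\V$, with $c_{tv}\mapsto tv$.
\item Interpret $\sortcol$ as $sub(\alpha)$, with $c_\beta\mapsto\beta$.
\item Interpret $\sortrow$ as the set of restrictions $\{w|_{sub(\alpha)} \mid w\in\cF\}$, with $r_0\mapsto v|_{sub(\alpha)}$.
\item Set $mat(w,\beta)=w(\beta)$.
\end{itemize}
Then $\Phi_{tv}$ and $\Phi_C$ hold by distinctness, and $\Phi_{r_0}$ holds because $v(\alpha)\notin\D$. $\Phi_M$ holds because every row is (the restriction of) an Nmatrix valuation, so each $cons_\alpha(\beta,r)$ is just the clause of \Cref{def:val} for $\beta$. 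For $\Phi_F$ we need that the first-order sentence $\Phi'$ characterising $\cF$, read over rows restricted to $sub(\alpha)$, holds when the rows come from members of $\cF$.
\begin{itemize}
\item For $\cn$, $\Phi'$ is universal and quantifier-free in the rows, as in \eqref{eq:c1-correct}, so this is immediate.
\item For $\ipl$ and $\mSfour$, $\Phi'$ has the form $\forall r\,\exists r'$, where the witness condition ranges over $sub(\alpha)$. The witnesses $w$ demanded by \Cref{def:ripl} or \Cref{def:rsfour} lie in $\cF$ as well: membership in all levels $\mathcal{L}_k$ gives witnesses at every level, and we need one witness in the intersection. Here I would invoke the locality/analyticity results cited for \cite{leme2025newIPL,DBLP:journals/logcom/Gratz22}. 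Their restrictions are rows in our domain.
\end{itemize}

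\emph{From a model to a countervaluation.} Let $\mathcal{A}\models\Phi^{\alpha}_{\rnmatrix}$. By $\Phi_{tv}$ the constants $c_{tv}$ are pairwise distinct. By $\Phi_M$, each $mat(r,c_\beta)$ for non-atomic $\beta$ equals some $c_{tv}$. For atomic $\beta$ we restrict to rows where the values lie among the $c_{tv}$. The cleanest route is to add the (harmless) axiom $\forall x{:}\sortval.\bigvee_{tv}x=c_{tv}$, or simply to note that a model can be modified so that $\sortval$ is exactly the set of constants, since no constraint mentions other elements. By $\Phi_C$ the columns are distinct, so each row $r$ defines a partial valuation $v_r:sub(\alpha)\to\V$. $\Phi_M$ says each $v_r$ is a partial valuation of $\ML{\Lan}$, $\Phi_{r_0}$ gives $v_{r_0}(\alpha)\notin\D$, and $\Phi_F$ says the finite family $\{v_r\}$ satisfies the restriction criterion locally on $sub(\alpha)$.

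\emph{The main obstacle.} The key step is now to lift $v_{r_0}$ to a total $v\in\cF$.
\begin{itemize}
\item For $\cn$ this is the analyticity of the Coniglio--Toledo RNmatrix: a partial valuation on a subformula-closed set satisfying the restrictions extends to a member of $\mathcal{F}_{\cn}$.
\item For $\ipl$ and $\mSfour$, the rows form a finite ``Kripke-like'' family closed under the required witnesses. I would first show by induction on $k$ that every $v_r$ belongs to the partial analogue of $\mathcal{L}_k$ for all $k$, since witnesses are again rows. I would then apply the analyticity theorems of \cite{leme2025newIPL} and of Gr\"atz, which the paper already relies on when asserting that the criteria ``can be decided locally''. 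These extend partial level valuations to total ones in $\mathcal{L}$.
\end{itemize}
This lifting step is where the real content sits, since the generic theorem holds only under the implicit hypothesis that $\Phi'$ faithfully captures $\cF$ restricted to subformulas. I would state that hypothesis explicitly and discharge it per logic using these cited analyticity results. The rest is bookkeeping.
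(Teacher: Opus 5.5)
Your proposal is correct and follows the same route as the paper's proof (soundness and completeness of $\rnmatrix$ reduce validity to the non-existence of a row in $\cF$ falsifying $\alpha$, and such rows are then matched with models of $\Phi^{\alpha}_{\rnmatrix}$). It does so in far more detail, and you rightly make explicit that the paper's ``by construction'' silently relies on the local criterion $\Phi'$ over $sub(\alpha)$ faithfully capturing $\cF$, i.e.\ on the analyticity results the paper invokes only informally in \Cref{sec:rnmat}.

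One small simplification: atomic columns need no extra axiom, row restriction or model surgery. For atomic $\beta$, $cons_\alpha(\beta,r)$ is literally $\bigvee_{tv\in\V} mat(r,c_\beta)=c_{tv}$, so $\Phi_M$ already forces, by induction on $\beta$, every $mat(r,c_\beta)$ to be one of the $c_{tv}$.
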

\begin{proof}
    By soundness and completeness of $\rnmatrix$, 
    $\alpha$ is valid iff 
    there is no row that assigns a non-designated value to the column corresponding to $\alpha$ and, 
    by construction, this is the case iff $\Phi^{\alpha}_{\rnmatrix}$ is unsatisfiable. 
\end{proof}

\paragraph{Using SMT solvers.} The above encoding allows us to use any
first-order theorem prover to decide the validity  of a formula $\alpha$ in a given
logic $\Lan$ via the corresponding RNmatrix. However, for better efficiency,
one can exploit specific theories supported by SMT solvers. In particular, the
SMT-LIB file produced by \tool{} makes use of the theories of arrays and
datatypes. The theory of arrays allows us to represent the matrix with
an arbitrary number of rows in a natural way. Although uninterpreted functions could also be
used for this purpose, our experiments
indicate that arrays lead to better performance. Moreover, the theory of
datatypes is used to define the pairwise distinct constants representing truth
values and columns, and formulas $\Phi_{t_v}$ and $\Phi_{C}$  in~\Cref{def:cons} are not needed
(but internalized in the datatype theory). 
Alternative encodings, for instance using  bit-vectors, are also possible
and are currently under evaluation.

Considering the logic $C_1$, let us show step by step the SMT file produced by \tool, that follows
the formula in~\Cref{def:cons}.  
In this logic, we have 3 different values, leading to the following declaration
 (the ``0'' in ``\texttt{Val 0}'' specifies that \texttt{Val} is a scalar type, and comments in SMT-LIB
 start with ``\texttt{;}''): 

\begin{lstlisting}
; Truth values for C1
(declare-datatypes ((Val 0)) (((TT) (FF) (tt))))
\end{lstlisting}

Next we define the sort for the columns of the matrix which is populated
with (different) constants, one for each subformula of the formula being tested. 
For instance, for the formula $p \vee \neg p$, we obtain:

\begin{lstlisting}
(declare-datatypes ((Col 0)) (( (v2)      ;  (p | ~ p)
                                (v1)      ;  ~ p
                                (v0))))   ;  p
\end{lstlisting}

\noindent Note that the formula being tested for validity is represented by the 
expression \texttt{v2}. 

Rows in the matrix are represented as elements of a new sort \texttt{Row}. 
\begin{lstlisting}
(declare-sort Row 0)
\end{lstlisting}

A valuation  is a mapping from elements of sort \texttt{Col} to elements
of sort \texttt{Val}, and a matrix 
is represented as an array (with no restriction on the size of the structure):  

\begin{lstlisting}
; A matrix mapping a column of a row into a truth value
(declare-fun mat () (Array Row (Array Col Val)))
\end{lstlisting}

Next the row $r_0$ is defined together with the  goal, 
stating that $mat[r_0][\alpha]$ takes a non-designated value:

\begin{lstlisting}
(declare-fun r_0 () Row)
(assert (= (select (select mat r_0) v2) FF))   ; Falsifying v2 (p | ~p)
\end{lstlisting}

Now we add constraints enforcing that the values allowed 
for a given column representing a compound formula 
are only those allowed by the multifunction associated to 
its main connective. 
This leads to two constraints: one for $\neg p$
and one for $p \vee \neg p$. Below the case
of the negation, where we use 
the if-then-else operator (\texttt{ite}) in SMT-LIB
to define $f_{\neg}$: 

\begin{lstlisting}
(assert (forall ((r_x Row))
    (ite (= (select (select mat r_x) v0) FF)              ; IF      v0 =  F
         (= (select (select mat r_x) v1) TT)              ; THEN    v1 =  T
         (ite (= (select (select mat r_x) v_0) tt)        ; ELSE-IF v0 =  t
              (distinct (select (select mat r_x) v1) FF)  ; THEN    v1 != F 
              (= (select (select mat r_x) v_1) FF)))))    ; ELSE    v1 =  F 
\end{lstlisting}

For this particular formula, there are no occurrences of subformulas
of the form $\alpha \wedge \neg \alpha$ and then,
no further constraints are added for the correctness criteria.

On this input, the SMT solver returns \texttt{unsat}, and we conclude that the formula is valid:
it is not possible to find a ``valid'' row (according to the elimination criteria of the RNmatrix) assigning a non-designated value to the column 
of the formula being tested.

\section{Implementation and Benchmarks}\label{sec:bench}
In this section we briefly describe the architecture of \tool, some
``optimizations'' in the produced SMT-LIB files,
and report the results on benchmarks. 
\tool~has been implemented in
C++ and it is freely available at~\cite{tool}. It uses
\href{https://www.antlr.org/}{ANTLR} for parsing the
\href{https://www.tptp.org/}{TPTP} files and
\href{https://www.boost.org/}{Boost} for executing processes in parallel (see below).
\footnote{A new implementation of \tool~in Rust is currently under development.
This will simplify the compilation and distribution of the tool through Cargo,
Rust’s package manager.} All the experiments reported here
 can be reproduced with
the scripts available in the tool. The plots shown here are based on executions on a
HP Dragonfly laptop 
with a processor Intel Core i7-1365U
10-core @ 5.2 GHz, 32 GiB memory, running Fedora Linux 43 and 
 Z3 version 4.15.8.

As mentioned before, \tool{} takes as input a TPTP
file (containing the conjecture under consideration) and produces an SMT-LIB file
encoding the problem as described in the previous section. In principle, any
SMT solver can be used to process this output. However, all the experiments
reported here were conducted using Z3, as it consistently gave us 
better results in the considered benchmarks.

\tool{} defines an abstract class \texttt{RNMatrix}  whose methods
need to be override by the logic at hand. 
Such methods correspond to the different formulas in 
~\Cref{def:cons}. The implementation
of the subclasses of \texttt{RNMatrix} 
are free to choose 
 any representation for the sorts (using datatypes, for instance), 
the matrix (uninterpreted functions or arrays) and 
the truth values (datatypes or integers). 
Below an example of one of these methods for the logic $C_1$, declaring
the truth values of the logic:

\begin{lstlisting}
std::string LogicC1::addSorts(){
    return  "(declare-datatypes ((VAL 0)) (((TT) (FF) (tt))))\n";
}
\end{lstlisting}

%!TEX root = main.tex
% !TEX spellcheck = en-US

\subsection{Benchmarking paraconsistent logics}

Using the RNmatrix in~\Cref{mat:cn}, \tool~becomes a prover for the entire
hierarchy $C_n$ of paraconsistent logics. In these logics, the criterion for
determining the validity of a row depends only on the values taken by certain
columns within \emph{the same row} (see~\Cref{eq:c1-correct} in the previous
section). This significantly simplifies the scenario: if row $r_0$ is
falsifiable, there is no need to show the existence of other rows to check
that $r_0\in \cF$, i.e., that $r_0$ is  a ``valid'' row in the matrix. 
Hence, there is no need to define a matrix with multiple rows (since checking a
single row suffices), nor to introduce a sort for columns.
 Instead, we declare,
for each subformula $\beta$ of $\alpha$, a constant $c_\beta$ of sort $\sortval$. Compare
the actual output of \tool~below with the one shown in the previous section for
the formula $p \vee \neg p$, where neither arrays nor quantification over rows
are required:

\begin{lstlisting}
(declare-datatypes ((VAL 0)) (((TT) (FF) (tt))))
(declare-fun v2 () VAL) ; (p | ~ p)
(declare-fun v1 () VAL) ; ~ p
(declare-fun v0 () VAL) ; p
(assert (= v2 FF))      ;  Main formula falsified
;; Rules of the matrix
(assert (ite (and (= v0 FF) (= v1 FF)) (= v2 FF)... ; (v2 = v0 | v1) IF V0=V1 = F THEN V2=F ...
(assert (ite (= v0 FF) (= v1 TT) ...                ; (v1 = not v0)  IF v0=F THEN v1=T ...
(check-sat)
\end{lstlisting}

\begin{figure}
\begin{subfigure}[c]{0.25\textwidth}
    \centering
\resizebox{.5\textwidth}{!}{
\begin{tabular}{ll}
\toprule
\rowcolor{gray!20}
$C_n$ & time \\
\midrule
$c_{20}$ & 17s \\
$c_{40}$ & 31s \\
$c_{100}$ & 3m16s \\
\bottomrule
\end{tabular}
}
\caption{\label{fig:cn:1}}
\end{subfigure}
\begin{subfigure}[c]{0.25\textwidth}
    \centering
\resizebox{.95\textwidth}{!}{
\begin{tabular}{lr}
\toprule
\rowcolor{gray!20}
group & subform (n=500) \\
\midrule
fifth & 5508 \\
first& 4403 \\
firstWrong& 4402 \\
 fourth& 5502 \\
 fourthWrong& 5501 \\
 second & 66503 \\
 sixth& 10004 \\
\bottomrule
\end{tabular}
}
\caption{\label{fig:cn:2}}
\end{subfigure}
~
\begin{subfigure}[c]{0.5\textwidth}
\centering
    \includegraphics[scale=0.35]{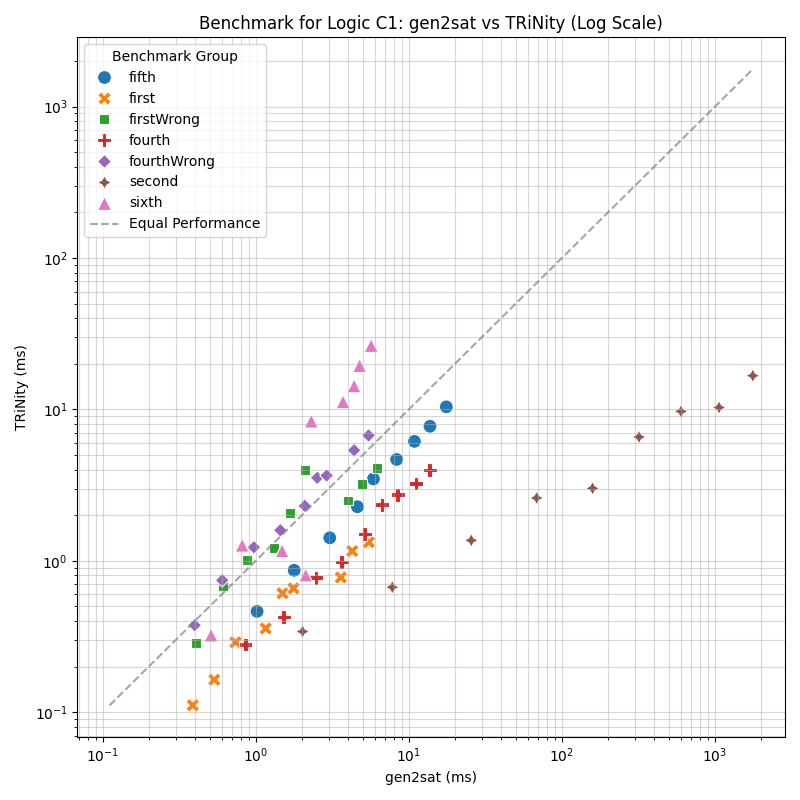}
    \caption{\label{fig:cn:3}}
\end{subfigure}
    \caption{
        \Cref{fig:cn:1} shows the 
        time taken by \tool~for completing the 45 entries of  \texttt{Cn\_propag}
        for different logics $C_n$;  \Cref{fig:cn:2} shows the number of subformulas 
for the instance $i=500$ for the different families in the benchmarks for $C_1$; 
\Cref{fig:cn:3} compares 
\tool~vs~gen2sat for the benchmarks on $C_1$.
Items below the diagonal are instances that \tool\ solved faster. 
\label{fig:cn}}
%\captionof{figure}{Computation times}
\end{figure}

We consider the six benchmark families described in~\cite{neto2006effective} and~\cite{neto2009towards} for logic $C_1$. 
Furthermore, we introduce a new family of benchmarks for any logic $C_n$, called
\texttt{Cn\_propag}, which is based on the propagation of consistency over
conjunction, disjunction, and implication
(see~\cite{CarnielliConiglio2016mbC}).

Let us begin with \texttt{Cn\_propag},
that contains three families $\mathit{conj}_i$, $\mathit{disj}_i$ and $\mathit{imp}_i$ where
$1\leq i\leq 15$ for a total of 45 formulas.
Formulas in this family have the form
$\alpha^n \land \beta^n \to (\alpha \# \beta)^n$, where $\# \in \{ \land, \lor,
\to \}$. Recall from~\Cref{mat:cn} that $\alpha^n$ encodes the consistency
depth of $\alpha$. Hence, \texttt{Cn_propag} states that the consistency of
smaller formulas implies the consistency of larger formulas. However, each
$\cn$ is not only capable of blocking explosion up to a certain level, but also
of blocking the propagation of consistency up to that level. For example,
$\alpha^1$ and $\beta^1$ imply that $(\alpha \to \beta)^1$ is true in $C_1$,
but not in any of the logics $C_i$ with $i > 1$.

 The formula $\alpha_{15}$, for each family $\alpha$, 
contain  182 subformulas.  
We have used this benchmark as a ``sanity check'' to ensure that the prover 
determines as valid only the formulas $\alpha_i$ in the logic $C_j$ whenever $j \leq i$.  
Checking that the logic $C_{10}$ validates the corresponding $3 \times 10$ 
formulas, while rejecting the remaining $3 \times 5$ formulas, takes approximately 
13 seconds. \Cref{fig:cn:1} presents additional instances for $n$ in  $C_n$  when checking all the 
45 formulas in this benchmark. As expected, performance degrades as $n$ increases 
in $C_n$, since more truth values are introduced.

For logic $C_1$, we benchmark \tool~against \texttt{gen2sat} 
 with a selected subset of the families of benchmark proposed in~\cite{neto2006effective} and~\cite{neto2009towards}. The results are shown
in~\Cref{fig:cn:3}, where each family was tested with instances ranging from 100
to 500 in increments of 50, yielding a total of 9 instances per family. The
number of subformulas in each case, for the instance $n = 500$, is shown in
\Cref{fig:cn:2}. Note that 
\texttt{gen2sat} performs better on the \texttt{sixth} family (5.6 vs.\ 26
seconds for $n = 500$), while \tool~significantly outperforms \texttt{gen2sat}
on the \texttt{second} family (16 vs.\ 1766 seconds for $n = 500$). For the
remaining families, the differences are mild; for instance, 5.4 vs.\ 1.3
seconds for the \texttt{first} family when $n = 500$.

%!TEX root = main.tex
% !TEX spellcheck = en-US

\subsection{Benchmarking $\mSfour$}

 For the modal logic $\mSfour$, we evaluate our tool using a selected subset of
 benchmarks from the Logic Workbench (LWB). We compare the performance of
 $\mettel$, $\ksp$, and \tool. $\mettel$~\cite{tishkovsky2012tableau} is a framework for generating tableau-based theorem provers from user-friendly specifications of proof systems. $\ksp$~\cite{ijcai2017p694} is an automated theorem prover implementing a resolution-based calculus for modal logic $\mK$ and its extensions, supporting additional configuration options (e.g., strategy and ordering settings). For the benchmarks with $\ksp$, we used the configuration file \texttt{ordered.conf}, which provided the best overall performance in our tests, together with the $\mSfour$ configurations described in~\cite{nalon2022local}.

 We use the 3-valued RNmatrix in~\Cref{sec:modal}. We have several choices for
 implementing the  class \texttt{LogicS4} (that extends \texttt{RNMatrix}). For instance, 
 to represent the truth values, we can use datatypes or just integers (0,1 and 2). 
 This latter representation gave us better results. 
 More importantly, 
 it  is well known~\cite{DBLP:conf/ecai/ArecesGHR00} that a modal formula $\alpha$ is satisfiable iff 
 it is satisfiable in a model 
 bounded 
 by the modal-depth of $\alpha$ ($\delta_\alpha$ below). %\co{Do we have a reference for that?} \rl{Maybe \url{https://tinyurl.com/227y3xn9} ?}
 Hence, our translation introduce the 
following function and constraints: 
\begin{lstlisting}
(declare-fun depth (Row) Int)
(assert (= (depth r_0) 0))                          ; depth of r0 is 0
(assert (forall ((r_x Row)) (<= (depth r_x)  $\delta_A$ ))) ; the depth of a row is less than $\delta_A$
\end{lstlisting}
Moreover, in the definition of the correctness criteria, each time the existence of another row $r_y$
is needed to support the ``validity'' a row $r_x$ ($r_x \in \cF$), the depth of $r_y$ is assumed to be 
the depth of $r_x$ plus one: 
\begin{lstlisting}
(assert (forall ((r_x Row) (c_x Col))                
   (=> (= (select (select mat r_x) c_x) 1 )        ; IF mat[rx][cx] = 1
       (exists ((r_y Row))                         ; THEN there exists ry
        (and (= (depth r_y) (+ 1 (depth r_x))) ... ;      where depth(ry) = depth(rx)+1 and ...
\end{lstlisting}

This allows us to control that only models of a given depth are considered. 

We also implemented a second instance of the class \texttt{RNMatrix} that we
call \texttt{LogicS4Bound}, where the sort \texttt{Row} is declared as a
datatype with $n$ constants. In this case, only 
models of up to $n$ rows/worlds are considered.
If the SMT
solver returns \texttt{sat} for one of these \emph{bounded} models, we have a
countermodel of a fixed size $n$. However, if the SMT solver returns
\texttt{unsat}, we cannot conclude that the formula is valid. 
For non-valid formulas, this can be useful to quickly find coutermodels. 

\begin{figure}
\begin{minipage}[c]{0.4\textwidth}

    \resizebox{.7\textwidth}{!}{
    \begin{tabular}{lccc}
\toprule
Prover & $\ksp$ & $\mettel$ & \tool \\
Family &  &  &  \\
\midrule
s4-45-n & \textbf{8} & 0 & 3 \\
s4-45-p & \textbf{29} & 0 & 13 \\
s4-branch-n & \textbf{4} & 2 & 1 \\
s4-branch-p & \textbf{7} & 2 & 6 \\
s4-grz-n & \textbf{7} & 1 & \textbf{7} \\
s4-grz-p & \textbf{46} & 0 & 20 \\
s4-ipc-n & 38 & 3 & \textbf{100} \\
s4-ipc-p & \textbf{6} & 1 & \textbf{6} \\
s4-md-n & 8 & 12 & \textbf{40} \\
s4-md-p & \textbf{4} & 2 & \textbf{4} \\
s4-path-n & 12 & 2 & \textbf{19} \\
s4-path-p & \textbf{13} & 1 & 3 \\
s4-ph-n & 5 & 3 & \textbf{10} \\
s4-ph-p & \textbf{5} & 3 & \textbf{5} \\
s4-s5-n & \textbf{6} & 2 & 2 \\
s4-s5-p & \textbf{10} & 1 & 1 \\
s4-t4p-n & 4 & 0 & \textbf{7} \\
s4-t4p-p & 7 & 0 & \textbf{8} \\
\bottomrule
\end{tabular}
}
\end{minipage}
\hfill
\begin{minipage}[c]{0.6\textwidth}
    \includegraphics[scale=0.5]{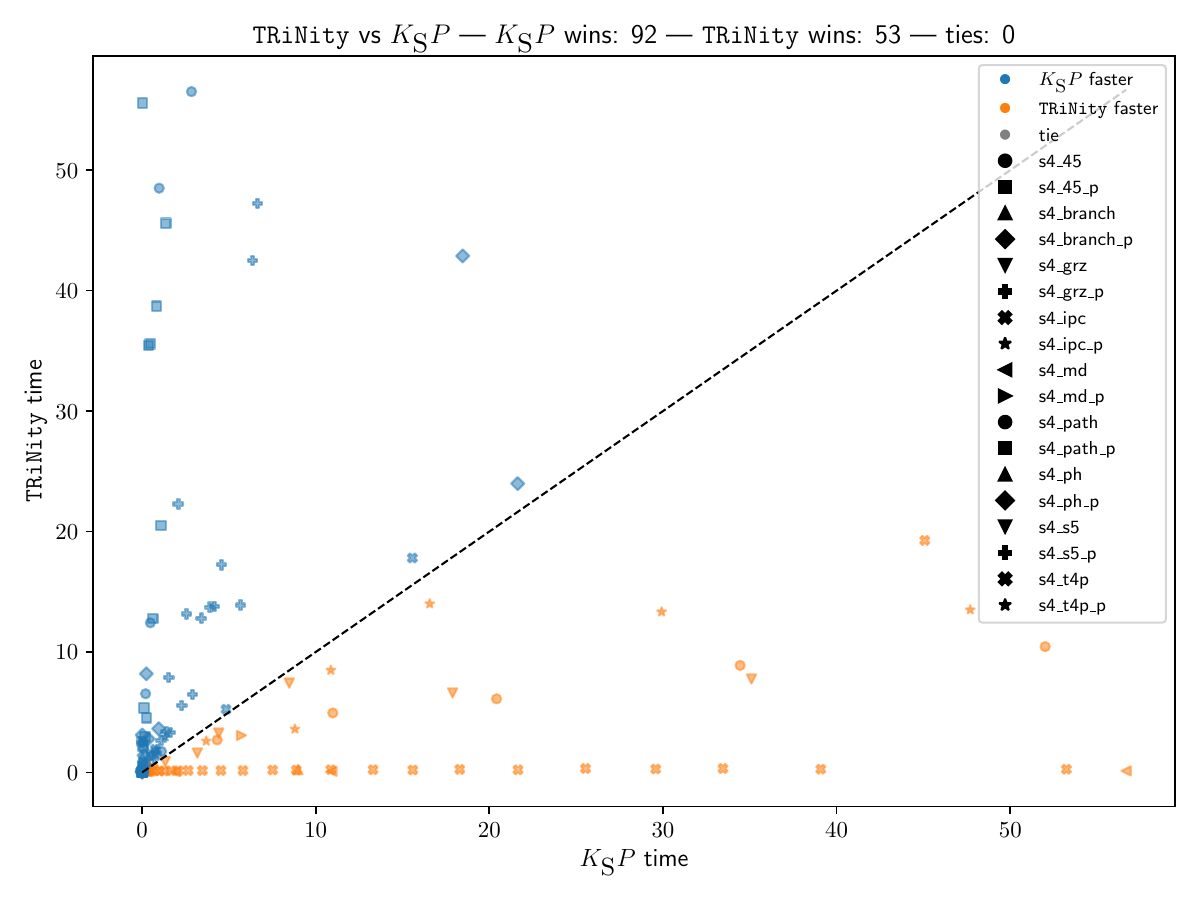}
\end{minipage}

    \caption{Benchmarks for logic S4. On the left, the maximum number of instances solved in 1 minute for each solver. 
    On the right, we compare the time taken by \tool~and $\ksp$ to solve all the instances.\label{fig:s4}}
\end{figure}

In~\Cref{fig:s4} we compare $\ksp$, $\mettel$ and \tool~running in parallel instances
of Z3 with the SMT-LIB file generated by \texttt{LogicS4} and
at most 4 instances (with different bounds) of \texttt{LogicS4Bound}. On the left of the figure, we
show the maximum number of instances solved in 1 minute for each solver and
family. In bold face, the winner. The results show that out tool, for some of
the families, is able to solve more instances in the given time window. We note
that for the positive instances (suffix ``p'' in the family), the only instance of Z3
that can decide the formula  is the one generated by \texttt{LogicS4} (i.e., the answer
of the ``bounded'' provers cannot be considered). In the
negative instances, in some cases, it is \texttt{LogicS4Bound} that disproves
the formula. On the right of the figure, we compare the time taken by
our tool and $\ksp$ to solve all the instances. In 92 instances, $\ksp$
outperformed \tool, and in 53 instances \tool~outperformed $\ksp$. 

%!TEX root = main.tex
% !TEX spellcheck = en-US

\subsection{Benchmarking Intuitionistic Propositional Logic}\label{sec:benc-ipl}

For $\ILp$, we consider the propositional fragment of the ILTP library
(version~1.1.2). In addition, we include two extra benchmark families, namely
\texttt{KLE} and \texttt{EC}. \texttt{KLE} is a set of theorems (and two non-theorems) extracted from Kleene's book ``Introduction to Metamathematics'' by Valeria de Paiva and Giselle Reis. \texttt{EC} is one of the families of valid formulas in $\ipl$ used for benchmark in~\cite{claessen2015sat} and~\cite{fiorentini2021efficient}.

The class \texttt{LogicIPL} implementes the RNmatrix for $\ILp$ with two values.
In this case, instead of defining those values as another datatype (or as an
integer), we use the Boolean type of the SMT. Similar to the case of S4, we
also implemented instances of the class \texttt{RNmatrix} that control the depth of the
formula (in this case nesting of implications and negations) and bounded
versions of it. 
The results are in~\Cref{tab:ipl}. We
compare $\intuitR$, $\ksp$ (using the G\"odel translation of IPL into S4), $\mettel$
and two provers based on our tool: \tool(ipl) implementing the matrix for IPL, and \tool(ipl-all)
that runs 
 Z3 with the output 
of \texttt{LogicIPL} in parallel with 
the output of at  most 4 instances (with different bounds) of \texttt{LogicIPLBound}.
$\intuitR$ definitly outperforms all the others. 

We note that $\intuitR$ (see~\cite{fiorentini2021efficient}) is a solver for $\ipl$ that relies on a combination of SAT solvers and Kripke semantics. It applies the strategy introduced in~\cite{claessen2015sat} for \texttt{intuit}, which consists of separating the validity problem into two parts, one of which can be solved using classical logic. To this end, they introduce a clausification procedure. The idea is that the hardest part of the problem is handled by the SAT solver, which benefits the efficiency of the intuitionistic solver. 
We also note that $\ksp$ did not perform very well in the family \texttt{EC}. This may be related to the growth of the modal depth in the formulas produced by the translation. $\tool$, on the other hand, performs particularly well on this family of formulas. In \texttt{EC}, the number of atoms, conjunctions, and disjunctions increases. However, the formulas in this family always contain only two implications. For this reason, the SMT problem produced by $\tool$ is almost a classical SAT problem in the semantics, which can then be solved efficiently by Z3. 
The results for the 
harder family SYJ 
are very similar for $\ksp$ and \tool(ipl),
and \tool(ipl-all) was able to solve more (negative) instances
in the given timeout in this case.

\begin{table}
    \centering
    \begin{tabular}{lcccccc}
\toprule
Family & $\intuitR$ & $\ksp$ & $\mettel$ & \tool(ipl) & \tool(ipl-all) \\
\midrule
EC(100) & 100 & 44 & 100 & 100 & 100 \\
LCL(2) & 2 & 2 & 2 & 2 & 2 \\
KLE(88) & 88 & 88 & 79 & 88 & 88 \\
SYN(20) & 20 & 20 & 19 & 19 & 19\\
SYJ(252) & 242 & 116 & 68 & 113 & 134 \\
\bottomrule
\end{tabular}
    
    \caption{Benchmarks for $\ILp$. Instances solved in up to 2 minutes for each prover.\label{tab:ipl}}
\end{table}

\section{Concluding Remarks}\label{sec:conclusion}
%!TEX root = main.tex
% !TEX spellcheck = en-US

We have introduced a new, efficient approach to implementing decision procedures based on RNmatrices. We find that $\tool$ achieves competitive performance across all the implemented logics, especially for the paraconsistent logics $\cn$. This approach also has the advantage of relying on SMT solvers without being tied to any particular implementation. Moreover, we believe that this contribution provides a general framework that can be extended and adapted both for implementing known RNmatrices and for discovering new ones. In particular, we plan to investigate RNmatrices for modal intuitionistic logic, for which $\tool$ can be employed from the outset.

Besides developing new RNmatrices, it would be interesting to find reductions of known RNmatrices. For instance, the RNmatrix for $\ipl$ that we have presented is a $2$-valued reduction of an equivalent $3$-valued RNmatrix for the same logic. One may ask whether there exists an elegant reduction of the $3$-valued decision procedure for $\mSfour$ to a $2$-valued version. Similar questions can be raised for other known RNmatrices, such as those we have proposed in~\cite{DBLP:conf/tableaux/LemeOPC25}.

Finally, exploring alternative encodings (integers, datatypes, bit-vectors) and systematically comparing SMT solvers constitutes a natural next step for further improving performance and scalability.

\paragraph{Acknowledgments} Leme thanks the support of the
São Paulo Research Foundation (FAPESP, Brazil) through the PhD scholarship
grant \#21/01025-3 and the Thematic Project RatioLog \#20/16353-3. This work was funded by French-Brazilian CAPES-COFECUB Research
Programme ``Logic and Intelligibility of Computational Processes'' (2024-2027). Olarte gratefully acknowledges also the
support from the NATO Science for Peace and Security
Programme through grant number G6133 (project SymSafe) and the SGR
project PROMUEVA (BPIN 2021000100160) under
the supervision of Minciencias Colombia. Pimentel is supported by the Leverhulme Trust grant RPG-2024-196 and
has received funding from the European Union's Horizon 2020 research and
innovation programme under the Marie Sk\l odowska-Curie grant agreement Number
101299559.
The authors are grateful for the valuable discussions with Marcelo E. Coniglio and Yoni Zohar, and for the insightful comments of the anonymous reviewers.

%\begin{itemize}
%    \item This framework can be used to ``validate'' new matrix based semantics...
%    \item This also provides new theorem provers for logics we are 
%        investigating (modal + intuitionistic)
%    \item Future: The quest of ``reducing'' the values in the matrix and ``simplifying''
%        the correctness check in the matrix.  {\color{blue} RL: fiz até aqui... }
%    \item Future: experimenting with other ways of encoding the problem (int vs datatypes, 
%        bit vectors, etc) and test other SMT solvers. 
%
%\end{itemize}

\bibliographystyle{eptcs}
\bibliography{biblio}
\end{document}